\newtheorem{Lemma}{Lemma}
\newtheorem{lemma}[Lemma]{$\mathbf{Lemma}$}
\newcounter{problem}
\newcounter{save@equation}
\newcounter{save@problem}
\newenvironment{problem}
{\setcounter{problem}{\value{save@problem}}%
  \setcounter{save@equation}{\value{equation}}%
  \let\c@equation\c@problem
  \subequations
}
{\endsubequations
  \setcounter{save@problem}{\value{equation}}%
  \setcounter{equation}{\value{save@equation}}%
}
\begin{document}%%
\title{\vspace{-0.5em} \huge{ Potentials and Limits of Using Preconfigured Spatial Beams as Bandwidth Resources: Beam Selection vs Beam Aggregation  }}

\author{ Zhiguo Ding, \IEEEmembership{Fellow, IEEE}     \thanks{ 
  
\vspace{-2em}

    Z. Ding  
 is    with the School of
Electrical and Electronic Engineering, the University of Manchester, Manchester, UK (email: \href{mailto:zhiguo.ding@manchester.ac.uk}{zhiguo.ding@manchester.ac.uk}).

  }\vspace{-2.5em}}
 \maketitle

\vspace{-1em}
\begin{abstract}
This letter studies  how to   use    spatial beams preconfigured in  a legacy spatial division multiple access (SDMA) network   as bandwidth resources via the implementation of non-orthogonal multiple access (NOMA). Two different beam management schemes, namely beam selection and beam aggregation, are developed to improve the overall system throughput   without consuming extra spectrum or changing the performance of the legacy network. Analytical and simulation results are presented to show that the two schemes realize different tradeoffs between system performance and complexity.  
\end{abstract}\vspace{-0.1em}

\begin{IEEEkeywords}
NOMA, SDMA,  beam management, power allocation, outage probability. 
\end{IEEEkeywords}
\vspace{-1em} 
\section{Introduction}
Recently, using  non-orthogonal multiple access (NOMA) as a type of add-ons in communication networks  has received a lot of attention, because it makes the implementation of NOMA more flexible in practice and also improves   the performance of various  communication networks   in a spectrally efficient manner \cite{9693536, 9693417}. For example, consider   a  legacy spatial division multiple access (SDMA) network, where  spatial beams have already been configured to serve legacy primary users \cite{1261332}. Conventional   multiple-input multiple-output NOMA (MIMO-NOMA) approaches need to redesign the legacy network in order to serve new secondary users  \cite{8352619,9779941,9787502}. An  alternative  is to directly use those preconfigured beams for serving the new   users, which makes the implementation of NOMA and the admission of the new users transparent to  the legacy network. As shown in \cite{bacnomamtc1}, these preconfigured spatial beams indeed have the potential to be used as  bandwidth resources, similar to orthogonal frequency-division multiplexing (OFDM) subcarriers, but the benefits of using these beams can be severely  limited by inter-beam interference. 
  
The aim of this letter is to provide more detailed analysis for the potentials and limits of using  these preconfigured spatial beams, by focusing on a simple case with multiple preconfigured beams and a single secondary user.   For the scenario where the secondary user's signals on different beams are independently encoded, \cite{bacnomamtc1} shows that the use of beam selection, i.e., selecting a single beam to serve the secondary user, is  optimal, and the outage probability achieved by   beam selection is analyzed in this letter. Despite its simplicity,     beam selection   suffers a drawback that its performance is degraded with  more beams available, because of  inter-beam interference. To mitigate this inter-beam interference, a new scheme, termed beam aggregation, is proposed in the letter, where the secondary user is encouraged to use multiple beams and its signals on these beams are jointly  designed.   Simulation results are presented to show that beam aggregation can effectively suppress  inter-beam interference and always outperform beam selection, but at a price of more system complexity.  
 \section{System Model} 
 Consider a legacy SDMA network with one $N$-antenna base station and $M$ single-antenna primary users, denoted by ${\rm U}^P_m$, whose channel vectors are denoted by $\mathbf{g}_m$.  Assume that $M$ beamforming vectors, denoted by $\mathbf{f}_m$, have been configured for the primary users. Furthermore, assume that  zero-forcing beamforming is used, i.e.,   $\begin{bmatrix}\mathbf{f}_1&\cdots &\mathbf{f}_M \end{bmatrix}= \mathbf{G}(\mathbf{G}^H\mathbf{G})^{-1}\mathbf{D}$, where $\mathbf{G}=\begin{bmatrix}\mathbf{g}_1 &\cdots &\mathbf{g}_M \end{bmatrix}$, $\mathbf{D}_{i,i} =\left( M[(\mathbf{G}^H\mathbf{G})^{-1}]_{i,i} \right)^{-\frac{1}{2}}$ and $\mathbf{A}_{i,j}$ denotes an element of $\mathbf{A}$ on its $i$-th row and $j$-th column \cite{1261332}. As a result,   $\mathbf{g}_m^H\mathbf{f}_i=0$, for $m\neq i$. 

As shown in \cite{bacnomamtc1}, these preconfigured beams, $\mathbf{f}_m$, can be exploited as a type of bandwidth resources and used to serve additional secondary users, similarly to OFDM subcarriers. In this paper, a simple case with a single secondary user, denoted by ${\rm U}^S$,  is focused. 
 On each of the $M$ beams, the base station superimposes ${\rm U}^P_m$'s signal, denoted by $s_m^P$, with  the secondary user's signal,  denoted by $s_m^S$, where the transmit powers of  $s_m^P$ and  $s_m^S$ are denoted by     $\alpha_m^P\rho$ and  $\alpha_m^S\rho$, respectively,   $\rho$ denotes the transmit power budget, $\alpha^P_m$ and $\alpha^S_m$ denote the power allocation coefficients,  and   $\alpha^P_m+\alpha^S_m\leq 1$. Therefore, the system model at ${\rm U}^S$ is given by
 \begin{align}\label{yS}
y^S = \mathbf{h}^H \sum^{M}_{i=1} \mathbf{f}_i\left(\sqrt{\alpha^P_i\rho}s_i^P+\sqrt{\alpha^S_i\rho}s_i^S\right) +n^S,
\end{align}
where $\mathbf{h}$ denotes the secondary user's channel vector and $n^S$ denotes the  Gaussian noise with its power normalized.

Assuming that  $s_m^S$, $1\leq m \leq M$, are independently encoded,  on beam $\mathbf{f}_m$, the secondary user ${\rm U}^S$ can decode $s_m^P$  with the following data rate: 
\begin{align} 
\tilde{R}_m = \log
\left(1+\frac{h_m\alpha^P_m}{h_m\alpha^S_m +\underset{i\neq m}{\sum}h_i\left(\alpha^P_i+\alpha^S_i\right)+\frac{1}{\rho}}
\right),
\end{align}
where $h_m=| \mathbf{h}^H\mathbf{f}_m|^2$. Assume that all the primary users have the same target data rate, denoted by $R^P$.   If $\tilde{R}_m\geq R^P$, ${\rm U}^S$ can carry out successive interference cancellation (SIC) successfully  and decode $s_m^S$ with the following data rate:
\begin{align}
 {R}_m = \log
\left(1+\frac{h_m\alpha^S_m}{ \underset{i\neq m}{\sum}h_i\left(\alpha^P_i+\alpha^S_i\right)+\frac{1}{\rho}}
\right).
\end{align}

As shown in \cite{bacnomamtc1}, with $s_m^S$, $1\leq m \leq M$, independently encoded, the optimal resource allocation strategy is to select a single beam to serve the secondary user, and the performance of   beam selection is analyzed in the following section. 
\vspace{-1em}
\section{Beam Selection }
The performance analysis of beam selection requires the explicit expressions for the power allocation coefficients, $\alpha^P_m$ and $\alpha^S_m$.  Depending on whether ${\rm U}^S$ is active on the beams, they  can be expressed differently.  
For the case that ${\rm U}^S$ is not active on beam $\mathbf{f}_m$, denote the corresponding power allocation coefficients by $\alpha_{m,I}^P$ and $\alpha_{m,I}^S$. Otherwise, they are denoted by $\alpha_{m,II}^P$ and $\alpha_{m,II}^S$.

If ${\rm U}^S$ is not active on beam $\mathbf{f}_m$, $\alpha_{m,I}^S=0$, and hence  the use of $\alpha_{m,I}^P=\min\left\{1, \frac{\epsilon_P}{\rho g_m} \right\}$ is sufficient to ensure that ${\rm U}^P_m$ decodes $s_m^P$, i.e., $\log(1+\rho \alpha_{m,I}^P g_m)\geq  {R}^P$, where $g_m=|\mathbf{g}_m^H\mathbf{f}_m|^2$ and $\epsilon_P=2^{ {R}^P}-1$. 

If ${\rm U}^S$ is active on beam $\mathbf{f}_m$,   ${\rm U}^P_m$'s data rate on beam $\mathbf{f}_m$ is given by
$
 {R}_m^P = \log
\left(1+\frac{g_m\alpha^P_{m,II}}{g_m\alpha^S_{m,II}  +\frac{1}{\rho}}
\right)$. 
To ensure that $ {R}_m^P\geq   {R}^P$, $\alpha_m^S$ should be chosen as follows:
\begin{align}
\alpha^S_{m,II} \leq \max\left\{ 0, \frac{g_m- \frac{\epsilon_P}{\rho}}{(\epsilon_P +1)g_m}\right\},
\end{align}
and $\alpha^P_{m,II}=1-\alpha^S_{m,II}$. To ensure  that ${\rm U}^S$ decodes  $s_m^P$,  $\tilde{R}_{m} \geq  
\bar{R}_m^P$ is required, which leads to the following constraint: 
\begin{align}
\alpha^S_{m,II} \leq  \max\left\{ 0, \frac{h_m  -\epsilon_P \underset{i\neq m}{\sum}h_i \alpha^P_{i,I} -\frac{\epsilon_P }{\rho}}{(1+\epsilon_P)h_m}\right\},
\end{align}
where $\alpha^P_{i,I} $ is used instead of $\alpha^P_{i,II} $ because the secondary user is not active on beam $\mathbf{f}_i$, $i\neq m$. Therefore, $\alpha^S_{m,II}$ can be chosen as follows: 
\begin{align}\label{alphaII}
\alpha^S_{m,II} =&\min\left\{ \max\left\{ 0, \frac{ g_m- \frac{\epsilon_P}{\rho}}{(\epsilon_P +1)g_m}\right\}\right.
\\\nonumber&
\left.\max\left\{ 0,\frac{h_m  -\epsilon_P \underset{i\neq m}{\sum}h_i \alpha^P_{i,I} -\frac{\epsilon_P }{\rho}}{(1+\epsilon_P) h_m}\right\}\right\}.
\end{align}
 
Assume that beam $\mathbf{f}_{m^*}$ is selected, which means that the outage probability of interest can be expressed as follows: ${\rm P}^o\triangleq 1 - {\rm P}\left(  \tilde{R}_{m^*}\geq R^P,  {R}_{m^*}\geq R^S, \right)$, where $R^S$ denotes the secondary user's target data rate. Analyzing ${\rm P}^o$ is challenging since $h_m$, $g_m$ and $h_i$, $m\neq i$, are correlated. The following lemma shows an interesting property of the NOMA transmission protocol.  
\begin{lemma}\label{lemma1}
Assume that the users' channels are independent and identically complex Gaussian distributed with zero mean and unit variance. ${\rm P}^o\rightarrow 0$ for $\rho \rightarrow \infty$. 
\end{lemma}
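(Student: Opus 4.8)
The plan is to avoid evaluating ${\rm P}^o$ in closed form---which is hard precisely because $h_m$, $g_m$ and the $h_i$ are correlated---and instead to establish the limit by a pointwise (almost-sure) argument followed by bounded convergence. Since beam selection is free to pick the best available beam, the no-outage event contains the event that a single fixed beam $m$ succeeds under the allocation that activates ${\rm U}^S$ only on beam $m$; hence ${\rm P}^o\leq 1-{\rm P}\left(\tilde{R}_m\geq R^P,\,R_m\geq R^S\right)$, and it suffices to show that this probability tends to $1$ as $\rho\to\infty$.

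First I would restrict attention to the probability-one event on which every $g_i$ and $h_i$ is strictly positive and finite; this holds almost surely because the gains are squared magnitudes of non-degenerate complex Gaussian channels. On this event, for $\rho$ large enough one has $\alpha^P_{i,I}=\min\{1,\frac{\epsilon_P}{\rho g_i}\}=\frac{\epsilon_P}{\rho g_i}$, so the aggregate interference-plus-noise seen by ${\rm U}^S$ satisfies $\sum_{i\neq m}h_i\alpha^P_{i,I}+\frac{1}{\rho}=\frac{1}{\rho}\left(\epsilon_P\sum_{i\neq m}\frac{h_i}{g_i}+1\right)\to 0$.

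Next I would analyze $\alpha^S_{m,II}$ via the explicit expression \eqref{alphaII}. As $\rho\to\infty$ the subtracted $O(1/\rho)$ terms vanish, so the first inner quantity tends to $\frac{1}{\epsilon_P+1}$ and the second tends to $\frac{h_m}{(1+\epsilon_P)h_m}=\frac{1}{\epsilon_P+1}$; hence $\alpha^S_{m,II}\to\frac{1}{\epsilon_P+1}>0$. Because \eqref{alphaII} enforces the bound that guarantees $\tilde{R}_m\geq R^P$, the condition $\tilde{R}_m\geq R^P$ then holds for all large $\rho$. Moreover the secondary SINR $\frac{h_m\alpha^S_{m,II}}{\sum_{i\neq m}h_i\alpha^P_{i,I}+1/\rho}$ has a numerator bounded away from zero and a denominator vanishing like $1/\rho$, so it diverges and $R_m\to\infty$, giving $R_m\geq R^S$ for all large $\rho$. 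Thus the indicator $\mathbf{1}\left(\tilde{R}_m\geq R^P,\,R_m\geq R^S\right)\to 1$ pointwise on the probability-one event, and since it is bounded by $1$, bounded convergence yields ${\rm P}\left(\tilde{R}_m\geq R^P,\,R_m\geq R^S\right)\to 1$, whence ${\rm P}^o\to 0$.

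I expect the main obstacle to be verifying that $\alpha^S_{m,II}$ remains bounded away from zero in the limit: since it is a minimum of two maxima in \eqref{alphaII}, both maxima must eventually be positive, which reduces to $g_m>0$ and $h_m>\epsilon_P\left(\sum_{i\neq m}h_i\alpha^P_{i,I}+\frac{1}{\rho}\right)$. The former holds almost surely, and the latter holds for large $\rho$ because its right-hand side vanishes while $h_m$ is a fixed positive number. Note that the correlation among $h_m$, $g_m$ and the $h_i$ never enters: the pointwise argument is valid for every admissible realization irrespective of the joint law, so correlation obstructs only the exact finite-$\rho$ outage analysis, not the limiting statement proved here.
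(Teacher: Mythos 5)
Your proof is correct, but it takes a genuinely different route from the paper's. The paper decomposes ${\rm P}^o$ over the subsets of beams on which $\alpha^S_{m,II}>0$, applies union bounds, and then evaluates or bounds each resulting term distributionally --- e.g., it uses the inverse-Wishart law of $g_1$ to show ${\rm P}(g_1\leq \epsilon_P/\rho)\approx \frac{1}{(N-M+1)!}(M\epsilon_P/\rho)^{N-M+1}$, and argues that the remaining terms are CDFs of $\rho$-independent random variables (such as $\xi_1=h_1/(\epsilon_P\sum_{i\neq 1}h_i/g_i+1)$) evaluated at points of order $1/\rho$. You instead fix a single beam $m$, observe that on the almost-sure event where all gains are positive and finite the power coefficients converge pointwise ($\alpha^P_{i,I}\to \epsilon_P/(\rho g_i)$, $\alpha^S_{m,II}\to 1/(1+\epsilon_P)$), deduce that the success indicator tends to $1$ realization by realization, and conclude by bounded convergence. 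Your observation that the correlation among $h_m$, $g_m$, $h_i$ is irrelevant to the limiting statement is exactly right, and your argument is shorter, avoids all distributional computation, and in fact needs nothing about the Gaussian model beyond almost-sure positivity and finiteness of the gains. What the paper's heavier route buys is quantitative information: the intermediate bounds exhibit the rate at which each term decays (e.g., the $\rho^{-(N-M+1)}$ behaviour of $Q_1$), which a pointwise-plus-dominated-convergence argument cannot deliver. One small point to make explicit in your write-up: your opening inequality ${\rm P}^o\leq 1-{\rm P}(\tilde{R}_m\geq R^P, R_m\geq R^S)$ presumes the selection rule picks the best admissible beam; this matches the paper's implicit definition (its appendix writes the outage event via $\max_m\gamma_m\leq\epsilon_s$), but since the main text only says ``assume beam $\mathbf{f}_{m^*}$ is selected,'' you should state that assumption.
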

\begin{proof}
See Appendix A. 
\end{proof}

{\it Remark 1:}  Lemma \ref{lemma1} shows that there is no error floor for the considered outage probability, despite strong co-channel interference. Or in other words, by using NOMA,  an additional user can be supported  with an arbitrarily high data rate,  without consuming extra spectrum but simply   increasing the transmit power.  This property is   valuable   given the scarcity of the spectrum available to communications.  

{\it Remark 2:} Our conducted simulation results show that increasing $M$ degrades the outage probability, which can be explained in the following. Each beam can be viewed as an OFDM subcarrier. By increasing $M$ is similar to increase the number of subcarriers. If the overall bandwidth is kept the same, increasing the number of subcarriers reduces the bandwidth of each subcarrier, and hence degrades the user's performance if a user can use  one subcarrier only. Similarly, for the considered NOMA system, increasing $M$ increases inter-beam interference, which causes the performance degradation. In the next section, two schemes based beam aggregation  is introduced to mitigate this inter-beam interference. 

\section{Beam Aggregation} 
Unlike beam selection,   beam aggregation encourages   the secondary user  to use  multiple beams. Denote $\mathcal{D}$ by the set including the beams   used by the secondary user,  which means that   the system model in \eqref{yS} can be rewritten as follows: 
\begin{align}\label{ys2}
y^S =& \mathbf{h}^H \sum_{i\in\mathcal{D}} \mathbf{f}_i\left(\sqrt{\alpha^P_i\rho}s_i^P+\sqrt{ \alpha^S_i\rho} \beta_is^S\right) 
\\\nonumber &+ \mathbf{h}^H \sum_{j\in\mathcal{D}^c}  \mathbf{f}_j\sqrt{\alpha^P_j\rho}s_j^P +n^S,
\end{align}
where $s^S$ denotes the secondary user's signal, $\beta_i=\frac{( \mathbf{h}^H \mathbf{f}_i)^*}{\sqrt{| \mathbf{h}^H \mathbf{f}_i|^2}} $ is introduced in order to ensure coherent  combining and also avoid changing  the overall transmit power, and   $\mathcal{D}^c$ denotes the complementary set of $\mathcal{D}$.

\vspace{-1em}
\subsection{Scheme I}
A low-complexity and straightforward scheme is to ask the secondary user to directly decode its own message by treating the primary users' signals as noise.  By using \eqref{ys2}, it is straightforward to show that the following data rate is achievable to the secondary user:
\begin{align}
R_S = \log\left(1+  \frac{\left( \underset{i\in \mathcal{D}}{\sum}   \sqrt{ h_i\alpha^S_i } 
\right)^2}{ \sum_{j=1}^Mh_j  {\alpha^P_j } +\frac{1}{\rho}  } 
\right).
\end{align} 

 To decide the power allocation coefficients, $\alpha_m^P$ and $\alpha_m^P$, recall that the system model at the primary users can be expressed as follows: 
\begin{align}\nonumber
y_m^P=& \mathbf{g}^H_m   \mathbf{f}_m\left(\sqrt{\alpha^P_m\rho}s_m^P+ \sqrt{\alpha^S_m\rho }\beta_m s^S\right) +n_m^P, \quad m\in \mathcal{D}
\\  y_m^P=& \mathbf{g}_m^H    \mathbf{f}_m\sqrt{\alpha^P_m\rho}s_m^P +n_m^P, \quad m\in \mathcal{D}^c,
\end{align}
which means the following data rates  are achievable
\begin{align}
R_m^P=& \log\left(1+\frac{g_m{\alpha^P_m } }{g_m\alpha^S_m\beta_m^2  +\frac{1}{\rho}}\right), \quad m\in \mathcal{D}
\\\nonumber R_m^P=& \log\left(1+g_m{\alpha^P_m\rho}  \right), \quad m\in \mathcal{D}^c,
\end{align}
where   $n_m^P$ denotes the noise at ${\rm U}^P_m$. 

In order to ensure that $R_m^P\geq R^P$, the following choices for the power allocation coefficients can be used:
\begin{align}\nonumber
\alpha^S_{m} =\max\left\{ 0, \frac{g_m- \frac{\epsilon_P}{\rho}}{(\epsilon_P +1)g_m}\right\}, 
\alpha^P_{m} =\min
\left\{1, \frac{\epsilon_P\left(g_m +\frac{1}{\rho}\right)}{g_m (1+\epsilon_P)}\right\},
\end{align}
for $m\in \mathcal{D}$. Otherwise, $\alpha^P_{m}=\min\left\{1, \frac{\epsilon_P}{\rho g_m} \right\}$ and $\alpha^S_{m} =0$.

As shown in the next section, the performance of Scheme I can ensure that the secondary user's data rate grows by increasing $M$; however, it suffers some performance loss compared to beam selection, particularly at high SNR. 
\vspace{-1em}
\subsection{Scheme II} 
To facilitate the description of the scheme, assume that the secondary user's channel gains are ordered as follows: $h_1\geq \cdots \geq h_M$, which means that a reasonable choice for $\mathcal{D}$ is given by $\mathcal{D}=\{1, \cdots, |\mathcal{D}|\}$, where $|\mathcal{D}|$ denotes the size of $\mathcal{D}$. Scheme II is to carry out SIC on the beams included in $\mathcal{D}$ successively, i.e., $s_i^P$ is decoded and removed before decoding $s_j^P$, for $i<j$ and $i,j\in\mathcal{D}$. After $s_m^P$, $m\in \mathcal{D}$, are decoded, the secondary user can decode its own signal, $s^S$. 

Therefore, on beam $\mathbf{f}_m$, $m\in \mathcal{D}$, the   data rate    for the secondary user to decode $s_m^P$ is given by 
\begin{align}\nonumber
\tilde{R}_m = \log
\left(1+\frac{h_m\alpha^P_m}{  \sum^{M}_{j=m+1}  h_j \alpha^P_j     +|\underset{i\in \mathcal{D}}{\sum} \sqrt{h_i\alpha^S_i }|^2+\frac{1}{\rho}}
\right).
\end{align}
Conditioned on $\tilde{R}_m\geq R^P$, $\forall m\in\mathcal{D}$, the secondary user can carry out SIC successfully  and realize the following data rate for its own signal:
\begin{align}
 {R}^S = \log
\left(1+\frac{| \underset{i\in\mathcal{D}}{\sum } \sqrt{h_i\alpha^S_i}    |^2  }{ \underset{j\in \mathcal{D}^c}{\sum}h_j \alpha^P_j+\frac{1}{\rho}}
\right).
\end{align}

Compared to Scheme I,  for Scheme II,  the choices for   the power allocation coefficients, $\alpha_m^P$ and $\alpha_m^S$, $m\in\mathcal{D}$, are more difficult to find, because they need to ensure that $s_m^P$, $m\in \mathcal{D}$, can be decoded by both ${\rm U}_m^P$ and ${\rm U}^S$. Compared to beam selection which uses a single beam only, the design of Scheme II is also more challenging since    multiple beams are used and hence   a choice of $\alpha_m^S$ affects that of $\alpha_i^S$, $m\neq i$. Note that designing    the power allocation coefficients to maximize the the secondary user's data rate     can be formulated as the following optimization problem:  
 \begin{problem}\label{pb:1} 
  \begin{alignat}{2}
\underset{\alpha_m^P,\alpha_m^S}{\rm{max}} &\quad    
 \log
\left(1+\frac{| \underset{i\in\mathcal{D}}{\sum }  \sqrt{h_i\alpha^S_i}   |^2  }{ \underset{j\in \mathcal{D}^c}{\sum}h_j\alpha^P_j+\frac{1}{\rho}}
\right)\label{1tobj:1} \\ \label{1tst:1}
\rm{s.t.} & \quad  \tilde{R}_m  \geq R^P_m, \forall m\in \mathcal{D}
\\
& \quad     \alpha_m^P\geq \frac{\epsilon_P\left(g_m +\frac{1}{\rho}\right)}{g_m (1+\epsilon_P)},\forall m\in \mathcal{D} \label{1tst:2}
\\
& \quad     \alpha_m^S+\alpha_m^P\leq 1,\forall m\in \mathcal{D}  \label{1tst:3}
\\
& \quad     \alpha_m^S\geq 0,  \alpha_m^P\geq 0,\forall m\in \mathcal{D}  \label{1tst:4}. 
  \end{alignat}
\end{problem} 
Note that constraint \eqref{1tst:1} ensures that ${\rm U}^S$ can decode $s_m^P$, and constraint \eqref{1tst:2} ensures that serving ${\rm U}^S$ on beam $\mathbf{f}_m$ does not degrade ${\rm U}_m^P$'s performance, $m\in \mathcal{D}$. 

It is straightforward to show that problem \ref{pb:1} is not concave, but it can be recast to an equivalent concave form as follows. First, note that  the following power allocation coefficients are  fixed: $\alpha^P_{m}=\min\left\{1, \frac{\epsilon_P}{\rho g_m} \right\}$ and $\alpha^S_{m} =0$,   $m\in \mathcal{D}^c$, as shown in the previous section. By using the fact that $\alpha^P_{m}$ and $\alpha^S_{m}$, $m\in \mathcal{D}^c$, are constants, problem \ref{pb:1} can be equivalently recast as follows:
 \begin{problem}\label{pb:2} 
  \begin{alignat}{2}
\underset{\alpha_m^P,\alpha_m^S}{\rm{max}} &\quad    
 \sum^{|\mathcal{D}|}_{i=1} \sqrt{h_i \alpha^S_i}     \label{2tobj:1} \\\nonumber
\rm{s.t.} & \quad  
 \frac{h_m\alpha^P_m}{ \tau_D + \sum_{j=m+1}^{|\mathcal{D}|} h_j \alpha^P_j     + \left(\sum^{|\mathcal{D}|}_{i=1} \sqrt{h_i \alpha^S_i}  \right)^2   }
 \geq \epsilon_P,\\&\quad\quad\quad\quad\quad\quad\quad\quad\quad\quad\quad\quad\quad 1\leq m\leq |\mathcal{D}| \label{2tst:1}
\\
& \quad     \alpha_m^P\geq \eta_m,\forall m\in \mathcal{D} \label{2tst:2}
\\
& \quad      \eqref{1tst:3},   \eqref{1tst:4}\nonumber
  \end{alignat}
\end{problem} 
where   $\eta_m=\frac{\epsilon_P\left(g_m+\frac{1}{\rho}\right)}{g_m (1+\epsilon_P)}$ and $\tau_D=\sum_{j=|\mathcal{D}|+1}^{M} h_j \alpha^P_j +\frac{1}{\rho}$. 
Problem \ref{pb:2} can  be further equivalently recast  as follows:
 \begin{problem}\label{pb:3} 
  \begin{alignat}{2}
\underset{\alpha_m^P,\alpha_m^S}{\rm{max}} &\quad    
 \sum^{|\mathcal{D}|}_{i= 1}  \sqrt{h_i \alpha^S_i}   \label{3tobj:1} \\
\rm{s.t.} & \quad  \nonumber
    \epsilon_P\sum_{j=m+1}^{|\mathcal{D}|} h_j \alpha^P_j     +\epsilon_P \left( \sum^{|\mathcal{D}|}_{i=1} \sqrt{h_i \alpha^S_i}  \right)^2  \\&\quad\quad\quad - h_m\alpha^P_m  + \epsilon_P \tau_D\leq 0, 1\leq m\leq |\mathcal{D}| \label{3tst:1}
\\
& \quad   \eqref{2tst:1},    \eqref{1tst:3},   \eqref{1tst:4}. \nonumber
  \end{alignat}
\end{problem} 
Problem \ref{pb:3} is not a concave problem mainly due to the fact that $ \left( \sum^{|\mathcal{D}|}_{i=1} \sqrt{h_i \alpha^S_i}  \right)^2 $ is a concave function of $\alpha_i^S$. Fortunately, problem \ref{pb:3} can be still recast to a concave problem as shown in the following. Define $x_m=\sqrt{\alpha_m^S}$, and problem \ref{pb:3} can be reformulated  as follows:
 \begin{problem}\label{pb:4} 
  \begin{alignat}{2}
\underset{\alpha_m^P,x_m}{\rm{max}} &\quad    
 \sum^{|\mathcal{D}|}_{i= 1} \sqrt{h_i}x_i   \label{4tobj:1} \\
\rm{s.t.} & \quad  
    \epsilon_P\sum_{j=m+1}^{|\mathcal{D}|} h_j \alpha^P_j    +\epsilon_P \left( \sum^{|\mathcal{D}|}_{i=1} \sqrt{h_i } x_i \right)^2  \nonumber\\ &\quad - h_m\alpha^P_m   + \epsilon_P \tau_D\leq 0, 1\leq m\leq |\mathcal{D}| \label{4tst:1}
\\
& \quad   \alpha_m^P\geq \eta_m, 1\leq m\leq |\mathcal{D}| \label{4tst:2}
\\
& \quad    x_m^2+\alpha_m^P\leq 1, 1\leq m\leq |\mathcal{D}| \label{4tst:3}
\\
& \quad    x_m\geq 0, \alpha_m^P\geq 0, 1\leq m\leq |\mathcal{D}| .\label{4tst:4}
  \end{alignat}
\end{problem} 
Note that $ \left( \sum^{|\mathcal{D}|}_{i=1} \sqrt{h_i } x_i \right)^2$ is a convex function, which means that the constraint function in \eqref{4tst:1} is convex. Similarly, constraint \eqref{4tst:3} is also convex, and all the other functions in problem \ref{pb:4} are affine functions, which means that problem \ref{pb:4} is a concave problem and can be easily solved by using those optimization solvers \cite{Boyd}. 

{\it Remark 3:} The performance of beam aggregation is depending on the choice of $\mathcal{D}$. For the simulations conducted for this paper, an exhaustive search is used to find the optimal choice of $\mathcal{D}$, where designing a more computationally efficient way to optimize $\mathcal{D}$ is an important direction for future research.  
 
 \section{ Simulation Results}
 In this section, the performance of the considered beam management    schemes is investigated and compared  by using computer simulation results. 
 
In Fig. \ref{fig1}, the performance of beam selection based NOMA transmission is studied by using two metrics, namely the outage probability and ergodic data rates. In particular, Fig. \ref{fig1a}  demonstrates that there is no outage probability error floor for the beam selection scheme, which confirms Lemma \ref{lemma1}. In addition, Fig. \ref{fig1a} shows that the performance of beam selection   is degraded by increasing $M$, as discussed in Remark 2. Fig. \ref{fig1b} shows that the use of NOMA transmission yields a considerably large ergodic data rate for the secondary user, and it is important to point  out that such a large data rate is supported without consuming extra spectrum or degrading the performance of the legacy network.

In Fig. \ref{fig2}, the performance of beam aggregation based NOMA transmission is studied by using beam selection as the benchmarking scheme. Fig. \ref{fig2a} shows that the use of the first type of beam aggregation can result in a moderate performance gain over beam selection, particularly with a large $M$ and at low SNR. Fig. \ref{fig2b} demonstrates that  the second type of beam aggregation can always outperform beam selection and the  beam aggregation scheme I. More importantly, the use of the second type of beam aggregation can ensure robust performance regardless of the choices of the beam number. But it is important to point out that the use of the   second type of beam aggregation results in more system complexity. 
 
 \begin{figure}[t] \vspace{-1em}
\begin{center}
\subfigure[Outage Performance $R^S=1 BPCU$]{\label{fig1a}\includegraphics[width=0.35\textwidth]{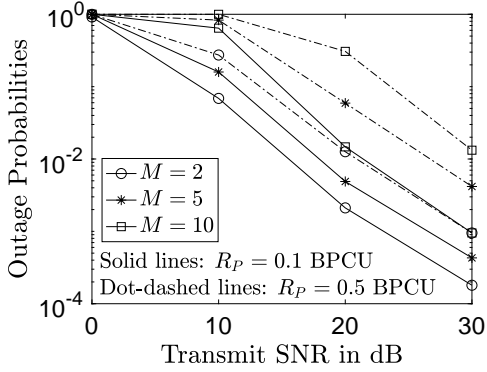}}\hspace{2em}
\subfigure[Ergodic Data Rates ]{\label{fig1b}\includegraphics[width=0.35\textwidth]{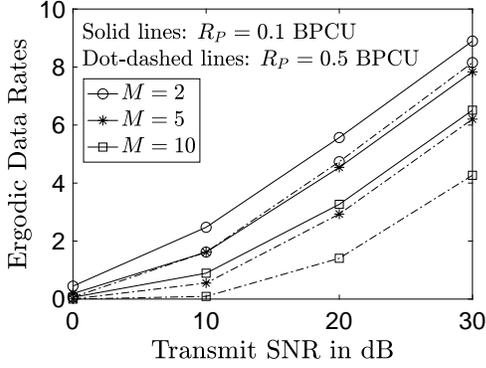}} \vspace{-1.5em}
\end{center}
\caption{ Performance of beam selection based NOMA transmission. $N=M$ and BPCU denotes bits per channel use.   \vspace{-1em} }\label{fig1}\vspace{-1em}
\end{figure}

 \begin{figure}[t] \vspace{-1em}
\begin{center}
\subfigure[Beam Selection vs Beam Aggregation Scheme I]{\label{fig2a}\includegraphics[width=0.35\textwidth]{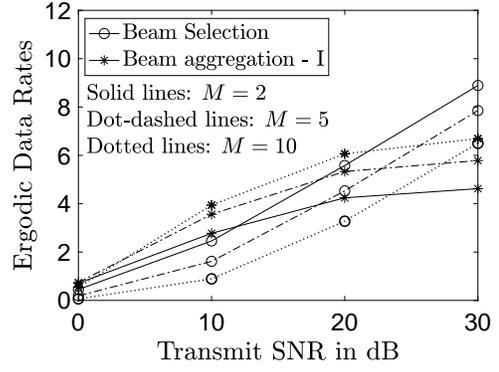}}\hspace{2em}
\subfigure[Beam Selection vs Beam Aggregation Scheme II]{\label{fig2b}\includegraphics[width=0.35\textwidth]{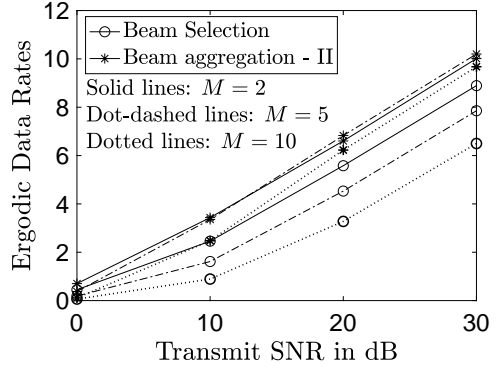}} \vspace{-1.5em}
\end{center}
\caption{ Comparison between the beam selection and beam aggregation based schemes. $N=M$ and $R^P=0.1$ BPCU.     \vspace{-1em} }\label{fig2}\vspace{-1em}
\end{figure}

\vspace{-1em}
\section{Conclusions}
This  letter has demonstrated   how those   spatial beams preconfigured in  a legacy SDMA network can be used  as bandwidth resources via the implementation of NOMA. Two different NOMA schemes have been developed to ensure an additional secondary user served without consuming extra spectrum or changing the performance of the legacy network. Analytical and simulation results have been presented to show that the two schemes realize different tradeoffs between system performance and complexity.  
%\linespread{1.3}

\vspace{-0.5em}
 \appendices
 \section{Proof for Lemma \ref{lemma1}}

To facilitate the performance analysis, the outage probability is rewritten as follows: 
\begin{align}
{\rm P}^o =& {\rm P}\left(  \alpha^S_{m,II} = 0, \forall m \right)\\\nonumber &+\underset{\mathcal{S}_i\subseteq \mathcal{S}}{\sum}    {\rm P}\left(\alpha^S_{m,II} >0, \forall m\in \mathcal{S}_i ,\alpha^S_{j,II} =0, \forall j\in \mathcal{S}_i^c  \right.\\\nonumber &\left.\max \left\{  \gamma_m
, m\in \mathcal{S}_i   \right\} \leq \epsilon_s,   \right),
\end{align}
where $\gamma_m=\frac{h_m\alpha^S_{m,II}}{ \underset{i\neq m}{\sum}h_i\alpha^P_{i,I}+\frac{1}{\rho}}$, $\mathcal{S}=\{1, \cdots, M\}$, $\mathcal{S}_i$  is a subset of $\mathcal{S}$, and  $\mathcal{S}_i^c$ is a complementary set of $\mathcal{S}_i$. 

${\rm P}^o $ can be first upper bounded as follows:
\begin{align}\label{eq7xx}
{\rm P}^o \leq &  
  {\rm P}\left(  \alpha^S_{1,II} = 0 \right) +\underset{\mathcal{S}_i\subset \mathcal{S}}{\sum}    {\rm P}\left( \alpha^S_{j,II} =0, \forall j\in \mathcal{S}_i^c    \right)
\\\nonumber &+   {\rm P}\left(\alpha^S_{m,II} >0, \forall m\in \mathcal{S},  \max \left\{ \gamma_m
, m\in \mathcal{S}   \right\} \leq \epsilon_s,   \right).
\end{align}

Note that  the probability $ {\rm P}\left(  \alpha^S_{1,II} = 0 \right)$   can be further bounded  as follows:
\begin{align}\label{eqdd13}
 {\rm P}\left(  \alpha^S_{1,II} = 0 \right) &\leq  {\rm P}\left(  g_1\leq  \frac{\epsilon_P}{\rho} \right) \\\nonumber &+{\rm P}\left(    h_1  \leq \epsilon_P \underset{i\neq 1}{\sum}h_i \alpha^P_{i,I} +\frac{\epsilon_P }{\rho}\right).
\end{align}
Denote the two  terms  at the right hand side of \eqref{eqdd13} by $Q_1$ and $Q_2$, respectively.  

Because the users' channels are independent and ideally complex Gaussian distributed,     $g_1=\left(\frac{[(\mathbf{G}^H\mathbf{G})^{-1}]_{i,i}}{M}\right)^{-1}$ follows the inverse-Wishart distribution, i.e.,  \cite{861781,7434594}
\begin{align}
f_{[(\mathbf{G}^H\mathbf{G})^{-1}]_{i,i}}(x) =&W^{-1}\left(I_M,N \right)\\\nonumber
=& \frac{1}{ \Gamma\left( N-M+1 \right)} x^{-(N-M+2)} e^{-\frac{1}{x}},
\end{align}
where $\Gamma(x)$  denotes the gamma function and $\gamma(n,x)$ denotes the incomplete gamma function \cite{GRADSHTEYN}. 
By using the distribution of $g_1$, 
$Q_1 $ can be evaluated as follows: 
\begin{align}
Q_1  =&\int_{\frac{\rho}{M\epsilon_P}}^{\infty}  \frac{1}{ \Gamma\left( N-M+1 \right)} x^{-(N-M+2)} e^{-\frac{1}{x}}
dx\\\nonumber =&  \frac{1}{ \Gamma\left( N-M+1 \right)} \gamma\left(N-M+1,\frac{M\epsilon_P}{\rho}\right). 
\end{align}
At high SNR, $Q_1$ can be approximated as follows:
\begin{align}
Q_1 =&  \frac{ \sum^{\infty}_{n=0}\frac{(-1)^n}{n!(N-M+1+n)}\left(\frac{M\epsilon_P}{\rho}\right)^{N-M+1+n}}{ \Gamma\left( N-M+1 \right)}
\\\nonumber
\approx&    \frac{1}{ (N-M+1)!}\left(\frac{M\epsilon_P}{\rho}\right)^{N-M+1}\rightarrow 0,
\end{align}
for $\rho\rightarrow \infty$.

Recall that $\alpha_{i,I}^P=\min\left\{1, \frac{\epsilon_P}{\rho g_i} \right\}$.  By using the definition $\mathcal{B}_1=\{2, \cdots, M\}$, $Q_2$ can be evaluated as follows:
\begin{align}\label{q2dz}
Q_2=&\underset{\mathcal{B}\subseteq \mathcal{B}_1}{\sum}{\rm P}\left(    h_1  \leq \epsilon_P \underset{i\neq 1}{\sum}h_i \alpha^P_{i,I} +\frac{\epsilon_P }{\rho},\right.\\\nonumber &
\left. \alpha_{j,I}^P= 1 ,j \in\mathcal{B}^c,  \alpha_{k,I}^P= \frac{\epsilon_P}{\rho g_k}, k \in \mathcal{B}\right),
\end{align}
where $\mathcal{B}^c$ denotes the complementary set of $\mathcal{B}$. Therefore, the probability can be upper bounded as follows:
\begin{align}\label{eq13zk}
Q_2 \leq  &
\underset{\mathcal{B}\subset \mathcal{B}_1}{\sum}{\rm P}\left(   \alpha_{j,I}^P= 1 ,j \in\mathcal{B}^c\right)\\\nonumber 
&+ {\rm P}\left(    h_1  \leq \epsilon_P \underset{i\neq 1}{\sum}h_i \alpha^P_{i,I} +\frac{\epsilon_P }{\rho},   \alpha_{i,I}^P= \frac{\epsilon_P}{\rho g_i}, i \in \mathcal{B}_1\right). 
\end{align} 
Denote the last term in \eqref{eq13zk} by $Q_3$ which can be expressed as follows:
\begin{align}\nonumber 
Q_3 =&   {\rm P}\left(  \xi_1 \leq \frac{\epsilon_P }{\rho},    g_i\geq  \frac{\epsilon_P}{\rho }, i \in \mathcal{B}_1\right) 
\leq {\rm P}\left( \xi_1 \leq \frac{\epsilon_P }{\rho} \right),
\end{align}
 where  $\xi_m=  \frac{ h_m}{\underset{i\neq m}{\sum}h_i \frac{\epsilon_P}{  g_i}+1} $.  Denote the CDF of $\xi_m$ by $F_{\xi_m}(x)$. Therefore, at high SNR, $Q_3$ can be approximated as follows: 
\begin{align}
Q_3  
\leq&   F_{\xi_1}\left( \frac{\epsilon_P }{\rho} \right)\rightarrow F_{\xi_1}\left(0\right)=0,
\end{align}
since $F_{\xi_m}(x)$ is not related to $\rho$. It is straightforward to show that ${\rm P}\left(   \alpha_{j,I}^P= 1 ,j \in\mathcal{B}^c\right)\rightarrow 0$ for $\rho \rightarrow \infty$, since  ${\rm P}\left(   \alpha_{j,I}^P= 1 ,j \in\mathcal{B}^c\right)={\rm P}\left(   \frac{\epsilon_P}{\rho g_j} \geq 1 ,j \in\mathcal{B}^c\right)$. Therefore, both the terms in \eqref{eq13zk}  goes to zero at high SNR, which means that $Q_2\rightarrow 0$ for $\rho \rightarrow \infty$. Since  $Q_i\rightarrow 0$, $i\in\{1,2\}$, $ {\rm P}\left(  \alpha^S_{1,II} = 0 \right)\rightarrow 0$ for $\rho\rightarrow \infty$. Similarly, one can also prove that $ {\rm P}\left( \alpha^S_{j,II} =0, \forall j\in \mathcal{S}_i^c    \right)$ in \eqref{eq7xx} goes to zero for  $\rho\rightarrow \infty$.

Denote the last term in \eqref{eq7xx} by $Q_4$. Similar to $Q_2$ in \eqref{eq13zk}, $Q_4$  can be   upper bounded as follows:
\begin{align}
&Q_4\leq  
\underset{\mathcal{B}\subset \mathcal{B}_1}{\sum}{\rm P}\left(   \alpha_{j,I}^P= 1 ,j \in\mathcal{B}^c\right)\\\nonumber 
&+ {\rm P}\left(    \max \left\{   \xi_m\alpha^S_{m,II}
, m\in \mathcal{S}   \right\} \leq \frac{\epsilon_s}{\rho} , \alpha_{i,I}^P= \frac{\epsilon_P}{\rho g_i}, i \in \mathcal{B}_1 \right) .
\end{align}
By using the two choices for $\alpha^S_{m,II}$ shown in \eqref{alphaII}, $Q_4$ can be further upper bounded as follows: 
\begin{align}
Q_4\leq  &\nonumber
\underset{\mathcal{B}\subset \mathcal{B}_1}{\sum}{\rm P}\left(   \alpha_{j,I}^P= 1 ,j \in\mathcal{B}^c\right) + {\rm P}\left(    \xi_1\alpha^S_{1,II} 
\leq \frac{\epsilon_s}{\rho} ,\alpha_{1,I}^P >0\right) \\\nonumber
=  &
\underset{\mathcal{B}\subset \mathcal{B}_1}{\sum}{\rm P}\left(   \alpha_{j,I}^P= 1 ,j \in\mathcal{B}^c\right)+ {\rm P}\left(     \frac{h_1\frac{ g_1- \frac{\epsilon_P}{\rho}}{(\epsilon_P +1)g_1} }{ \epsilon_P\underset{i\neq 1}{\sum}\frac{h_i}{ g_i}+1}
\leq \frac{\epsilon_s}{\rho}  \right)\\ \label{eq16d} 
& + {\rm P}\left(  \xi_1\leq \frac{\epsilon_P+\epsilon_s(1+\epsilon_P)}{\rho}  \right) . 
\end{align}
Following the steps  to bound $Q_1$ and $Q_2$, it is straightforward to show that the first and third terms in \eqref{eq16d} go to zero at high SNR. Furthermore,  note that ${\rm P}\left(     \frac{h_1\frac{ g_1- \frac{\epsilon_P}{\rho}}{(\epsilon_P +1)g_1} }{ \epsilon_P\underset{i\neq 1}{\sum}\frac{h_i}{ g_i}+1}
\leq \frac{\epsilon_s}{\rho}  \right) \rightarrow \tilde{F}_1(0)=0$ for $\rho\rightarrow \infty$, where $\tilde{F}_1(x)$ denotes the CDF of $  \frac{h_1\frac{ g_1 }{(\epsilon_P +1)g_1} }{ \epsilon_P\underset{i\neq 1}{\sum}\frac{h_i}{ g_i}+1}$ and  $\tilde{F}_1(x)$ is not related to $\rho$. Therefore, $Q_4\rightarrow 0$ for $\rho\rightarrow \infty$. Since all the three terms in \eqref{eq7xx} approach zero at high SNR, the proof is complete. 
\bibliographystyle{IEEEtran}
\bibliography{IEEEfull,trasfer}
  \end{document}